\documentclass[10pt,journal]{IEEEtran}

\usepackage{enumitem}
\usepackage{amsmath,amssymb,amsthm,mathtools}
\usepackage{graphicx}
\usepackage{cite}
\usepackage[colorlinks=true,linkcolor=blue,citecolor=blue,urlcolor=blue]{hyperref}

\newtheorem{theorem}{Theorem}
\newtheorem{proposition}[theorem]{Proposition}

\theoremstyle{remark}

\newcommand{\bbC}{\mathbb{C}}              
\newcommand{\E}{\mathbb{E}}                

\newcommand{\bH}{\mathbf{H}}               
\newcommand{\bx}{\mathbf{x}}               
\newcommand{\by}{\mathbf{y}}               
\newcommand{\bv}{\mathbf{v}}               
\newcommand{\bA}{\mathbf{A}}               
\newcommand{\bM}{\mathbf{M}}
\newcommand{\bN}{\mathbf{N}}
\newcommand{\bI}{\mathbf{I}}               
\newcommand{\bV}{\mathbf{V}}               
\newcommand{\bP}{\mathbf{P}}               
\newcommand{\bPhi}{\mathbf{\Phi}}          
\newcommand{\bLambda}{\mathbf{\Lambda}}    

\newcommand{\Nt}{N_{\mathrm{t}}}           
\newcommand{\Nr}{N_{\mathrm{r}}}           

\newcommand{\SNR}{\mathsf{SNR}}

\newcommand{\C}{\mathsf{C}}                    

\newcommand{\cV}{\mathcal{V}}              
\newcommand{\cI}{\mathcal{I}}

\newcommand{\lamcut}{\lambda_{\mathrm{cut}}}  
\newcommand{\Cl}{\operatorname{Cl}_2}         
\newcommand{\Li}{\operatorname{Li}_2}         
\newcommand{\jj}{\mathrm{j}}                  
\newcommand{\dd}{\,\mathrm{d}}                
\DeclareMathOperator{\tr}{tr}
\DeclareMathOperator{\diag}{diag}

\providecommand{\figref}[1]{Fig.~\ref{#1}}
\providecommand{\secref}[1]{Section~\ref{#1}}

\providecommand{\proprefs}[2]{Propositions~\ref{#1}--\ref{#2}}
\providecommand{\proprefand}[2]{Propositions~\ref{#1} and~\ref{#2}}

\begin{document}

\title{A Finite-SNR Closed Form for the Full-CSI Capacity
of the Square Mar\v{c}enko--Pastur MIMO Channel}

\author{Mohamed~Akrout,~\IEEEmembership{Member,~IEEE}, and~Robert~W.~Heath,~\IEEEmembership{Fellow,~IEEE}
\thanks{M. Akrout is with the Department of Electrical Engineering and Computer Science, University of Tennessee, Knoxville, TN 37996, USA. R. W. Heath, Jr. is with the Department of Electrical and Computer Engineering, University of California, San Diego, La Jolla, CA 92093, USA.
(e-mails: makrout@tennessee.edu; rwheathjr@ucsd.edu).}}

\markboth{}%
{Akrout \MakeLowercase{\textit{et al.}}: Finite-SNR Closed Form for the Square Mar\v{c}enko--Pastur MIMO Capacity}

\maketitle

\begin{abstract}
For the canonical independent and identically distributed (IID) single-user multiple-input multiple-output (MIMO) channel with full channel state information at the transmitter (CSIT), the asymptotic capacity per receive antenna is a waterfilling integral over the Mar\v{c}enko--Pastur law, whose shape depends only on the ratio of transmit to receive antennas. Closed forms are known for every ratio other than the square one, and there only above a finite signal-to-noise ratio (SNR). The square ratio behaves differently because the Mar\v{c}enko--Pastur support reaches the origin. The waterfilling cutoff then stays strictly inside the support at every finite SNR, and the water level has been available only numerically. This letter closes that gap with a classical trigonometric parametrization of the square law, which removes the edge singularity and replaces the moving cutoff by a single angle. Both the SNR and the capacity become explicit functions of that angle, so the capacity curve is traced by sweeping the angle rather than by solving a scalar constraint at each operating point. The same parametrization delivers the low-SNR behavior in closed form, where the slope of capacity in SNR equals the upper edge of the limiting spectrum.

\end{abstract}

\begin{IEEEkeywords}
MIMO capacity, channel state information, waterfilling, Mar\v{c}enko--Pastur law, random matrix theory.
\end{IEEEkeywords}

\IEEEpeerreviewmaketitle

\section{Introduction}\label{sec:intro}
\subsection{Motivation and prior work}
The capacity of a multiple-input multiple-output (MIMO) link is governed by the limiting spectrum of the channel Gram matrix \cite{tse2005fundamentals}. For the canonical independent and identically distributed (IID) channel between $\Nt$ transmit and $\Nr$ receive antennas, that spectrum is the Mar\v{c}enko--Pastur law, whose shape is set by the aspect ratio $\Nt/\Nr$. The full-CSIT capacity is the waterfilling functional over that law \cite{TulinoVerduRandomMatrixTheoryWireless2004}. When $\Nt\neq \Nr$, the functional admits compact closed forms once the signal-to-noise ratio (SNR) exceeds a finite threshold \cite{GrantSomeResultsMultiAntenna2002, TulinoLozanoVerduMIMOCapacityChannel2004}. The square case $\Nt = \Nr$ behaves differently. Its Mar\v{c}enko--Pastur support reaches the origin, so the water level never floods the entire spectrum at any finite SNR. The capacity has therefore been available in closed form only in the infinite-SNR limit \cite{TulinoVerduRandomMatrixTheoryWireless2004}. Prior work at finite SNR, however, characterizes the capacity through a waterfilling integral whose water level must be found numerically \cite{GrantSomeResultsMultiAntenna2002, JayaweeraPoorCapacityMultipleAntenna2003}.

\subsection{Contributions}
In this letter, we derive the finite-SNR full-CSIT capacity of the square canonical MIMO channel in closed form. The contributions are summarized as follows:
\begin{itemize}
\item We identify what prevents the square ratio from admitting the closed-form treatments that resolve every other ratio. Away from the square ratio, the spectral density is bounded away from zero, so raising the SNR eventually floods the whole support and removes the waterfilling cutoff from the problem. At the square ratio, the density reaches the origin. The cutoff then stays an interior point of the support at every finite SNR, so the capacity integral never reduces to a full-support one.
\item We derive closed-form parametric expressions for the square ratio, based on a classical trigonometric parametrization of the Mar\v{c}enko--Pastur law in which the SNR and the capacity are both explicit functions of a single angle.
\item We verify the closed form against numerical waterfilling over the full SNR range and against the exact high-SNR asymptote.
\end{itemize}

\subsection{Outline}
\secref{sec:model} sets up the model and the waterfilling functional. \secref{sec:cases} explains why the non-square ratios are solvable and the square one is not. \secref{sec:angular} derives the closed-form expressions, \secref{sec:verify} verifies it numerically, and \secref{sec:remarks} concludes.

\section{Background on full-CSI MIMO capacity}\label{sec:model}
The full-CSIT capacity of a MIMO link reduces to a waterfilling integral against the
limiting spectrum of the channel Gram matrix. Carrying that reduction out carefully is what
makes the obstacle at the square ratio visible. In this section, we state the signal model, the
Shannon transform, and the waterfilling solution in the form used throughout the letter.

\subsection{System model}
Consider a single-user link with $\Nt$ transmit and $\Nr$ receive antennas. The received signal is
\begin{align}
\label{eq:channel}
\by=\bH\bx+\bv,
\end{align}
where $\bH\in\bbC^{\Nr\times \Nt}$ collects the fading coefficients between each transmit and each
receive antenna, $\bx\in\bbC^{\Nt}$ is the transmitted vector, and $\bv\in\bbC^{\Nr}$ is additive
noise. We study the large-antenna regime in which $\Nt$ and $\Nr$ grow without bound at a fixed
ratio, the \emph{aspect ratio}
\begin{align}
\label{eq:beta}
\beta=\lim_{\Nt,\Nr\to\infty}\,\frac{\Nt}{\Nr}.
\end{align}

\noindent Without loss of generality, channel gains are scaled so that $\E\!\left[\tr\{\bH\bH^*\}\right]=\Nr$ and
have the variance of each entry equal to $1/\Nt$ when the entries are identically distributed. The average received signal-to-noise ratio ($\SNR$) per observation is then
\begin{align}
\label{eq:snrdef}
\SNR=\frac{\E[\|\bx\|^2]}{\tfrac{1}{\Nr}\,\E[\|\bv\|^2]} .
\end{align}
Because the antennas may transmit correlated streams, the second-order statistics of the input
matter. Normalized by its energy per dimension, the input covariance is
\begin{align}
\label{eq:Phidef}
\bPhi=\frac{\E[\bx\bx^*]}{\tfrac{1}{\Nt}\,\E[\|\bx\|^2]},
\end{align}
so that $\E[\tr\{\bPhi\}]=\Nt$. Diagonalizing $\bPhi=\bV\bP\bV^*$ separates the transmission into $1)$ \emph{signaling
directions} (the columns of the unitary matrix $\bV$) and $2)$ a \emph{power allocation} represented by the
nonnegative diagonal matrix $\bP$ whose $j$th entry is the power assigned to the $j$th direction \cite{HeathLozanoFoundationsMIMOCommunication2018}.

\subsection{Channel-state information and capacity functional}
The capacity-achieving input covariance $\bPhi$, and hence the capacity itself, depends on how much the transmitter
knows about $\bH$ \cite{GoldsmithJafarJindalVishwanathCapacityLimits2003}. Three regimes are typically considered:
\begin{itemize}
\item \textit{Full CSIT:} the transmitter knows the realization of $\bH$ instantaneously, so
$\bPhi$ may be adapted to it. This is the relevant regime for fixed and low-mobility links and for
reciprocal (time-duplexed) systems.
\item \textit{Statistical CSIT:} the transmitter knows only the distribution of $\bH$; then $\bV$
aligns with the eigenvectors of $\E[\bH^*\bH]$ and $\bP$ is found by an iterative
waterfilling.
\item \textit{No CSIT:} the choice is the isotropic input $\bPhi=\bI$.
\end{itemize}
Closed-form capacity expressions at $\beta=1$ are available for the isotropic input \cite{VerduShamaiSpectralEfficiencyCDMARandom1999, RapajicPopescuInformationCapacityRandom2000}. The open case addressed here is the waterfilling input with full CSIT, for which only the $\SNR\to\infty$ limit was known.

To define the capacity in a compact manner, we use the \emph{Shannon transform}. Let $\bA$ be an $n\times n$
Hermitian nonnegative-definite random matrix with (real, nonnegative) eigenvalues
$\lambda_1(\bA),\dots,\lambda_n(\bA)$. As $n\to\infty$, the empirical distribution of these eigenvalues is assumed to converge to a deterministic limiting law \cite{TulinoVerduRandomMatrixTheoryWireless2004}. Let $X\ge0$ denote a random variable distributed
according to that limiting law to represent an eigenvalue of $\bA$ drawn from its limiting spectrum. The Shannon transform $\cV_{\bA}(\SNR)$ is the asymptotic
per-dimension mutual information of a Gaussian channel whose eigenmodes have gains equal to the
eigenvalues of $\bA$ \cite{TulinoVerduRandomMatrixTheoryWireless2004}:
\begin{align}
\cV_{\bA}(\SNR)&=\lim_{n\to\infty}\frac1n\,\E\big[\log\det(\bI+\SNR\,\bA)\big]\notag\\
&=\E\big[\log(1+\SNR\,X)\big].\label{eq:shannonT}
\end{align}
In \eqref{eq:shannonT}, the right-hand side is the average of the scalar map $x\mapsto\log(1+\SNR\,x)$ over the limiting
spectrum of $\bA$. Here, $x$ denotes a value taken by the random eigenvalue $X$, and the
expectation is over that limiting law. In other words, one can rewrite \eqref{eq:shannonT} as
\begin{align}
    \cV_{\bA}(\SNR)=\int\log(1+\SNR\,x)\,
\dd F_{\bA}(x)
\end{align}
where $F_{\bA}$ is the limiting eigenvalue distribution. The per-receive-antenna capacity is the largest Shannon transform of the effective
channel Gram matrix $\bH\bPhi\bH^*$ over admissible inputs,
\begin{align}
\label{eq:shannon}
\C(\SNR)=\max_{\bPhi:\,\tr\{\bPhi\}=\Nt}\ \cV_{\bH\bPhi\bH^*}(\SNR).
\end{align}
Equation~\eqref{eq:shannon} is a \emph{functional} of the channel in the variational sense: its
argument is the matrix-valued input covariance $\bPhi$. For each admissible $\bPhi$, the Shannon transform
$\cV_{\bH\bPhi\bH^*}$ returns the rate that input achieves, and the capacity is the
supremum of that rate subject to the power constraint $\tr\{\bPhi\}=\Nt$. We now specialize \eqref{eq:shannon} to the full-CSIT regime, which is the
subject of this work.

\subsection{Full-CSIT waterfilling solution}\label{sec:wf}
With full CSIT, the maximizer of \eqref{eq:shannon} is classical \cite{HeathLozanoFoundationsMIMOCommunication2018, TelatarCapacityMultiAntennaGaussian1999}: $\bV$ diagonalizes
$\bH^*\bH$, and $\bP$ is the waterfilling allocation over its eigenvalues
$\lambda_1,\dots,\lambda_{\Nt}$,
\begin{align}
\label{eq:wf}
[\bP]_{j,j}=\Big(\nu-\frac{1}{\SNR\,\lambda_j}\Big)^{\!+},
\qquad \tr\{\bP\}=\Nt .
\end{align}
Here $(\cdot)^+=\max\{\cdot,0\}$ and $\nu$ is the \emph{water level} which is a single scalar that is common to
all directions and chosen so that the total-power constraint $\tr\{\bP\}=\Nt$ holds. A direction is
allocated power only if its channel gain is strong enough, i.e., $\lambda_j>1/(\SNR\,\nu)$.
For the full CSIT case, the input covariance optimizer is $\bPhi=\bV\bP\bV^*$ \cite{HeathLozanoFoundationsMIMOCommunication2018}. After writing the eigendecomposition $\bH^*\bH=\bV\bLambda\bV^*$ with
$\bLambda=\diag(\lambda_1,\dots,\lambda_{\Nt})$ and using the identity
$\det(\bI+\bM\bN)=\det(\bI+\bN\bM)$, \eqref{eq:shannon} becomes
\begin{align}
    \C(\SNR)&= \frac{1}{\Nr}\log\det\!\big(\bI+\SNR\,\bH\bPhi\bH^*\big)\notag\\
    &=\frac{1}{\Nr}\log\det\!\big(\bI+\SNR\,\bP\bLambda\big)\notag\\
    &=\frac{1}{\Nr}\sum_{j=1}^{\Nt}\log\big(1+\SNR\,[\bP]_{j,j}\,\lambda_j\big)\notag\\
    &=\underbrace{\frac{\Nt}{\Nr}}_{\to\,\beta}\cdot
\underbrace{\frac{1}{\Nt}\sum_{j=1}^{\Nt}\big(\log(\SNR\,\nu\,\lambda_j)\big)^+}
_{\text{average over the spectrum of }\bH^*\bH}.\label{eq:shannon2}
\end{align}
The inner factor in \eqref{eq:shannon2} is an average of a fixed function over the eigenvalues
$\lambda_1,\dots,\lambda_{\Nt}$ of $\bH^*\bH$. Encoding those eigenvalues in the empirical spectral distribution
\begin{align}
\label{eq:esd}
F^{(\Nt)}_{\bH^*\bH}(x)=\frac{1}{\Nt}\,\#\{\,j:\lambda_j\le x\,\},
\end{align}
i.e.\ the fraction of eigenvalues not exceeding $x$, the inner average is exactly
$\int\big(\log(\SNR\,\nu\,\lambda)\big)^+\dd F^{(\Nt)}_{\bH^*\bH}(\lambda)$. As
$\Nt,\Nr\to\infty$ with $\Nt/\Nr\to\beta$, the empirical distribution \eqref{eq:esd} converges
almost surely to the Mar\v{c}enko--Pastur law with density $f_\beta$ \cite{TulinoVerduRandomMatrixTheoryWireless2004}. Taking this limit yields the almost-sure per-antenna capacity
\begin{align}
\label{eq:cap-limit}
\C(\SNR)=\beta\int\big(\log(\SNR\,\nu\,\lambda)\big)^+ f_\beta(\lambda)\dd \lambda.
\end{align}
Equation \eqref{eq:cap-limit} is the starting point for \secref{sec:cases}. The whole
difficulty of the $\beta=1$ case sits in the behavior of $f_\beta$ near $\lambda=0$ inside this
integral.

\section{Mar\v{c}enko--Pastur law with three aspect-ratio cases}\label{sec:cases}
Whether the waterfilling integral has a closed form is decided by a single feature of the
Mar\v{c}enko--Pastur law, namely how far its support sits from the origin. This section makes
that dependence explicit and isolates the square ratio as the one case where the distance
vanishes.

\subsection{Canonical channel and its limiting spectrum}
Consider the canonical channel $\bH$ having IID zero-mean complex Gaussian entries. The empirical spectral distribution \eqref{eq:esd} of $\bH^*\bH$ converges
almost surely, as $\Nt,\Nr\to\infty$ with $\Nt/\Nr\to\beta$, to the Mar\v{c}enko--Pastur law of ratio $\beta$ \cite{TulinoVerduRandomMatrixTheoryWireless2004}. This is the limiting spectrum of large IID Gram matrices: it is supported on a single interval $[a,b]$ and has density
\begin{align}
\label{eq:mpedges}
f_\beta(\lambda)=\frac{\sqrt{(\lambda-a)(b-\lambda)}}{2\pi\beta\lambda},
~ \text{where } 
\begin{cases}
a=(1-\sqrt\beta)^2,\\
b=(1+\sqrt\beta)^2.
\end{cases}
\end{align}

Two features of \eqref{eq:mpedges} decide everything that follows. First, both support edges depend on $\beta$, and the lower edge $a$ is the one that matters, since it measures how far the spectrum sits from the origin. Second, the density has an explicit $1/\lambda$ factor, which is of no consequence as long as the support stays away from the origin ($a>0$), but becomes decisive when $a\to0$.

\subsection{Full-CSIT capacity and waterfilling cutoff}
Specializing \eqref{eq:cap-limit} to the Mar\v{c}enko--Pastur law, and writing the positive part as a lower
integration limit, the asymptotic full-CSIT capacity is
\begin{align}
\label{eq:cap-general}
\C(\SNR)=\beta\int_{\max\{a,\lamcut\}}^{b}
\log\!\Big(\frac{\nu\,\SNR}{\beta}\,\lambda\Big)\,f_\beta(\lambda)\dd \lambda,
\end{align}
where the \emph{waterfilling cutoff} $\lamcut$ and the total power constraints are given by
\begin{align}
\lamcut&=\frac{\beta}{\SNR\,\nu},\notag\\
\int_{\max\{a,\lamcut\}}^{b}&\Big(\nu-\frac{\beta}{\SNR\,\lambda}\Big)f_\beta(\lambda)\dd \lambda=1.\label{eq:constraint-general}
\end{align}
The cutoff $\lamcut$ is the smallest eigenvalue that receives any power. The lower integration limit is therefore $\max\{a,\lamcut\}$.

\subsection{Solvability of the non-square cases}\label{sec:offdiag}
The non-square ratios already have a closed-form solution for two reasons.
\begin{itemize}[leftmargin=*]
    \item \textit{The soft edge is bounded away from zero}: When $\beta\neq1$, the lower support edge is strictly positive, i.e., $a=(1-\sqrt\beta)^2>0$. In random-matrix terminology, this strictly positive lower edge, where the density vanishes like a square root as $\lambda\downarrow a$, is a \emph{soft edge}. It yields a gap interval $[0,a)$ separating the smallest eigenvalues from the origin.\footnote{The contrasting
situation where the support reaches the origin and no such gap exists is a \emph{hard edge} \cite{TracyWidomLevelSpacingDistributions1994}.} Since this gap has positive width,
raising the SNR lets waterfilling reach the whole support. Since $a>0$, the $1/\lambda$ factor is bounded on $[a,b]$ and the power
constraint becomes an algebraic equation which can be solved for $\nu$ in closed form.

    \item \textit{Reciprocity simplifies the problem}: With full CSIT, the capacity is symmetric under exchange of the roles of transmitter and receiver, and we have
    \begin{align}
    \label{eq:recip}
    \C(\beta,\SNR)=\beta\,\C(1/\beta,\SNR).
    \end{align}
    The reason Eq. \eqref{eq:recip} holds is that the capacity depends on $\bH$ only through the nonzero eigenvalues of its Gram
    matrix, and the two Gram matrices $\bH^*\bH$ (of size $\Nt$) and $\bH\bH^*$ (of size $\Nr$)
    share exactly the same $\min\{\Nt,\Nr\}$ nonzero eigenvalues. Swapping the roles of transmitter and
    receiver therefore leaves the waterfilling problem over those eigenvalues unchanged. The swap does, however, map $\beta=\Nt/\Nr$ to $1/\beta$ and rescale the per-receive-antenna normalization by the factor
    $\Nt/\Nr=\beta$. Consequently, it suffices to evaluate
    the full-support integrals for $\beta<1$ while the case of $\beta>1$ branch follows by \eqref{eq:recip}
    without any new computation.
\end{itemize}

Carrying out the full-support integrals of the full-CSI capacity for $\beta<1$ and mapping to $\beta>1$ by \eqref{eq:recip} yields the known closed form \cite{GrantSomeResultsMultiAntenna2002, GrantRayleighFadingMultiAntenna2002, TulinoLozanoVerduMIMOCapacityChannel2004}

\begin{align}
\label{eq:thm35}
\C(\SNR)=
\begin{cases}
\beta\log\!\big(\tfrac{\SNR}{\beta}+\tfrac{1}{1-\beta}\big)+(1-\beta)\log\tfrac{1}{1-\beta}\\
\hspace{0.6cm}-\beta\log e, & \hspace{-0.3cm}\beta<1,\\[6pt]
\log\!\big(\beta\,\SNR+\tfrac{\beta}{\beta-1}\big)+(\beta-1)\log\tfrac{\beta}{\beta-1}\\
\hspace{0.6cm}-\log e,
& \beta>1.
\end{cases}
\end{align}

\subsection{The obstacle at the square ratio}\label{sec:square}
At the square ratio $\Nt=\Nr$, the gap interval $[0,a)$
between the origin and the smallest eigenvalue vanishes, i.e., its width $a=(1-\sqrt\beta)^2$ shrinks to zero, and at $\beta=1$ the spectrum reaches the origin.
In other words, the density blows up at the edge toward which the cutoff descends. At $\beta=1$ the Mar\v{c}enko--Pastur density \eqref{eq:mpedges} becomes
\begin{align}
\label{eq:mp1}
f_1(\lambda)=\frac{\sqrt{\lambda(4-\lambda)}}{2\pi\lambda}
=\frac{1}{2\pi}\sqrt{\frac{4-\lambda}{\lambda}},\qquad \lambda\in(0,4].
\end{align}
The $1/\lambda$ factor of the Mar\v{c}enko--Pastur law, which is of no consequence when $a>0$, now persists all the way down to the origin and makes $f_1$ unbounded as $\lambda\downarrow0$. The partial-support integral \eqref{eq:cap-general} must therefore be evaluated against a density that blows up precisely at the edge toward which the cutoff $\lamcut$ slides as $\SNR\to\infty$.

For this reason, the known result for $\beta=1$ full-CSIT capacity in closed form has been only obtained in the limit $\SNR\to\infty$ \cite{TulinoVerduRandomMatrixTheoryWireless2004}. At finite SNR, the capacity has been left computable only through a numerically-solved water level \cite{GrantSomeResultsMultiAntenna2002, JayaweeraPoorCapacityMultipleAntenna2003}, and no closed-form solution is known. The asymptotic parametric integrals appear in \cite{GrantSomeResultsMultiAntenna2002}, while the finite-antenna instance of \cite{JayaweeraPoorCapacityMultipleAntenna2003} adapts power across time under a long-term power constraint. We instead waterfill per realization under an instantaneous constraint. The two coincide in the large-antenna limit, where the empirical spectrum concentrates on its deterministic limit.

\section{Closed-form full-CSI capacity at $\beta=1$}\label{sec:angular}

One classical change of variable removes both obstacles at the square ratio at once. The obstacles are $i)$ the lower integration limit $\lamcut$, which cannot be pushed to the support edge at any finite SNR, and $ii)$ the density \eqref{eq:mp1}, which is \emph{unbounded near $\lambda=0$} at the hard edge, precisely where the cutoff lives in the high-SNR limit. We pass from the eigenvalue to the singular value, $\lambda=s^2$ with $s=2\sin\theta$, and then apply the standard trigonometric substitution for the resulting quarter-circle law \cite{MarcenkoPasturDistributionEigenvalues1967, NicaSpeicherLecturesCombinatoricsFree2006}, the same substitution that gives the Catalan-number moments of the law. Concretely,
\begin{align}
\label{eq:sub}
\lambda=4\sin^2\theta,\quad \dd \lambda=8\sin\theta\cos\theta\dd \theta,\quad
\theta\in\big(0,\tfrac\pi2\big].
\end{align}
The substitution is the natural one here. The support $\lambda \in (0,4]$ maps bijectively to $\theta \in (0,\pi/2]$, and the two edges $\lambda\in\{0,4\}$ become $\theta\in\{0,\pi/2\}$. Since $4-\lambda=4\cos^2\theta$, we
have $\sqrt{(4-\lambda)/\lambda}=\cot\theta$, and the Mar\v{c}enko--Pastur law \eqref{eq:mp1} pushes forward to a
bounded, everywhere-smooth \emph{raised-cosine} weight, i.e.:
\begin{align}
\label{eq:pushforward}
f_1(\lambda)\dd \lambda
=\frac{1}{2\pi}\cot\theta\cdot 8\sin\theta\cos\theta\dd \theta
=\frac{4}{\pi}\cos^2\theta\dd \theta .
\end{align}
The singularity at $0$ is now gone because the Jacobian $8\sin\theta\cos\theta$ vanishes at the origin at exactly the rate needed to cancel the $1/\sqrt\lambda$ blow-up, leaving a trigonometric polynomial.
The upper edge $\lambda=4$ maps to $\theta=\pi/2$, and the interior cutoff $\lamcut$ maps to a
\emph{reference angle} $\theta_0$ defined by
\begin{align}
\label{eq:theta0}
\sin^2\theta_0=\frac{\lamcut}{4}=\frac{1}{4\nu\,\SNR}
~\Longleftrightarrow~
\kappa\triangleq\nu\,\SNR=\frac{1}{4\sin^2\theta_0}.
\end{align}
The one unknown of the problem, the water level $\nu$, is now encoded in $\theta_0$ and the edge-singular integrand has become a trigonometric polynomial. In the truncated (partial-support) problem, this parametrization removes the edge singularity, represents the moving cutoff by the single angle $\theta_0$, and reduces the power-constraint integrals to trigonometric expressions. It thereby lets us eliminate $\nu$ and make the $\SNR$ explicit in $\theta_0$. Before we proceed further, we recall two integrals:
\begin{align}
J_0(\theta_0)&\triangleq\int_{\theta_0}^{\pi/2}\cos^2\theta\dd \theta
=\tfrac12\Big(\tfrac\pi2-\theta_0\Big)-\tfrac14\sin 2\theta_0,\label{eq:J0}\\
K_0(\theta_0)&\triangleq\int_{\theta_0}^{\pi/2}\cot^2\theta\dd \theta
=\cot\theta_0-\Big(\tfrac\pi2-\theta_0\Big).\label{eq:K0}
\end{align}

\subsection{SNR as a function of the reference angle $\theta_0$}\label{sec:snr}
We first turn the power constraint into an explicit formula for the $\SNR$. At $\beta=1$,
\eqref{eq:constraint-general} reads
\begin{align}
\label{eq:constr1}
\nu\int_{\lamcut}^{4}f_1(\lambda)\dd \lambda
-\frac{1}{\SNR}\int_{\lamcut}^{4}\frac{1}{\lambda}\,f_1(\lambda)\dd \lambda=1 .
\end{align}
We change both integrals to the angular variable. The first integrand is
$f_1\dd \lambda=\tfrac{4}{\pi}\cos^2\theta\dd \theta$ by \eqref{eq:pushforward}, and we get using \eqref{eq:J0}
\begin{align}
\int_{\lamcut}^{4}f_1\dd \lambda=\frac{4}{\pi}\int_{\theta_0}^{\pi/2}\cos^2\theta\dd \theta
=\frac{4}{\pi}J_0(\theta_0).
\end{align}
For the second integrand, divide \eqref{eq:pushforward} by $\lambda=4\sin^2\theta$:
\begin{align}
\frac{1}{\lambda}f_1\dd \lambda
=\frac{1}{4\sin^2\theta}\cdot\frac{4}{\pi}\cos^2\theta\dd \theta
=\frac{1}{\pi}\cot^2\theta\dd \theta,
\end{align}
and we get using \eqref{eq:K0}
\begin{align}
    \int_{\lamcut}^{4}\frac{1}{\lambda}f_1\dd \lambda=\frac{1}{\pi}K_0(\theta_0).
\end{align}
Substituting both into \eqref{eq:constr1} gives the compact relation
\begin{align}
\label{eq:constr-ang}
\frac{4}{\pi}\,\nu\,J_0(\theta_0)-\frac{1}{\pi\,\SNR}\,K_0(\theta_0)=1 .
\end{align}
From \eqref{eq:theta0}, $\nu=\kappa/\SNR$ with
$\kappa=1/(4\sin^2\theta_0)$, so the first term in \eqref{eq:constr-ang} becomes
\begin{align}
\frac{4}{\pi}\,\nu\,J_0
=\frac{4}{\pi}\cdot\frac{1}{4\sin^2\theta_0\,\SNR}\,J_0
=\frac{1}{\pi\,\SNR}\cdot\frac{J_0(\theta_0)}{\sin^2\theta_0}.
\end{align}
Multiplying \eqref{eq:constr-ang} by $\pi\,\SNR$ then isolates the $\SNR$:
\begin{align}
\label{eq:constr-ang2}
\pi\,\SNR=\frac{J_0(\theta_0)}{\sin^2\theta_0}-K_0(\theta_0).
\end{align}
Inserting \eqref{eq:J0} and \eqref{eq:K0}, and using
$\sin2\theta_0=2\sin\theta_0\cos\theta_0$ so that
$\tfrac{\sin2\theta_0}{\sin^2\theta_0}=2\cot\theta_0$ yields
\begin{align}
&\frac{J_0}{\sin^2\theta_0}-K_0\\
&\hspace{0.5cm}=\frac{\tfrac12(\tfrac\pi2-\theta_0)-\tfrac14\sin2\theta_0}{\sin^2\theta_0}
-\Big(\cot\theta_0-(\tfrac\pi2-\theta_0)\Big)\notag\\
&\hspace{0.5cm}=\Big(\tfrac\pi2-\theta_0\Big)\Big(1+\frac{1}{2\sin^2\theta_0}\Big)-\frac32\cot\theta_0 .
\label{eq:snr-simplify}
\end{align}
Dividing by $\pi$ yields the following proposition.

\begin{proposition}[Parametric SNR]
\label{prop:snr}
For each $\theta_0\in(0,\pi/2)$, the operating $\SNR$ that produces cutoff angle $\theta_0$ is
\begin{align}
\label{eq:snr-final}
\SNR(\theta_0)=\frac{1}{\pi}\left[\Big(\frac\pi2-\theta_0\Big)
\Big(1+\frac{1}{2\sin^2\theta_0}\Big)-\frac32\cot\theta_0\right].
\end{align}
The map $\theta_0\mapsto\SNR(\theta_0)$ is a decreasing bijection from $(0,\pi/2)$ onto
$(0,\infty)$ with $\SNR\to0$ as $\theta_0\to\pi/2$ and $\SNR\sim 1/(4\theta_0^2)\to\infty$ as
$\theta_0\to0$.
\end{proposition}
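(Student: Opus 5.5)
The formula \eqref{eq:snr-final} itself is already established by the chain \eqref{eq:constr1}--\eqref{eq:snr-simplify}. What remains is the qualitative part: monotonicity, the two limits, and the asymptotic rate. To get these I would not differentiate \eqref{eq:snr-final} blindly. Instead I would return to the unsimplified form \eqref{eq:constr-ang2} and fold $J_0/\sin^2\theta_0$ and $K_0$ back into a single integral. Using $\cot^2\theta=\cos^2\theta/\sin^2\theta$, this gives
\begin{align*}
\pi\,\SNR(\theta_0)=\int_{\theta_0}^{\pi/2}\cos^2\theta\Big(\frac{1}{\sin^2\theta_0}-\frac{1}{\sin^2\theta}\Big)\dd\theta .
\end{align*}
This representation is the key step, and it makes most of the claims nearly immediate. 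For $\theta\in(\theta_0,\pi/2)$ we have $\sin\theta>\sin\theta_0$, so the integrand is strictly positive on a set of positive measure and $\SNR(\theta_0)>0$. The representation is also just the power constraint read at the cutoff, which is a useful consistency check.

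For strict monotonicity, take $\theta_0<\theta_0'$. Two effects both work in the same direction. First, on the common range $(\theta_0',\pi/2)$ the integrand for $\theta_0$ strictly dominates the one for $\theta_0'$, because $1/\sin^2\theta_0>1/\sin^2\theta_0'$. Second, the extra interval $(\theta_0,\theta_0')$ contributes a nonnegative amount. Hence $\SNR(\theta_0)>\SNR(\theta_0')$.

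As a cross-check, I would also differentiate \eqref{eq:snr-final} directly. Using $-1+1/\sin^2\theta_0=\cot^2\theta_0$, the derivative should collapse to
\begin{align*}
\pi\,\SNR'(\theta_0)=\frac{\cos\theta_0}{\sin^3\theta_0}\Big(\tfrac12\sin 2\theta_0-\big(\tfrac\pi2-\theta_0\big)\Big).
\end{align*}
Writing $\varphi=\pi/2-\theta_0$, we have $\sin2\theta_0=\sin2\varphi<2\varphi$ for $\varphi>0$. So the bracket is negative on $(0,\pi/2)$, confirming strict decrease.

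Next I would handle the limits. As $\theta_0\to\pi/2$, the integration interval shrinks to a point while the integrand stays bounded, so $\SNR\to0$. As $\theta_0\to0$, \eqref{eq:J0} gives $J_0(\theta_0)\to\pi/4$, so $J_0/\sin^2\theta_0\sim\pi/(4\theta_0^2)$. Meanwhile \eqref{eq:K0} gives $K_0(\theta_0)=\cot\theta_0+O(1)=O(1/\theta_0)$, which is of lower order. Dividing by $\pi$ yields $\SNR\sim1/(4\theta_0^2)\to\infty$.

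Finally, $\SNR(\theta_0)$ is continuous on $(0,\pi/2)$. It is strictly decreasing, and its limits are $\infty$ at $0$ and $0$ at $\pi/2$. By the intermediate value theorem it is therefore a bijection onto $(0,\infty)$. This also justifies the parametrization used in the paper: every finite SNR corresponds to exactly one interior cutoff angle, consistent with the claim in \secref{sec:square} that the cutoff never reaches the hard edge.

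The main obstacle is purely organizational: recognizing that the combination $J_0/\sin^2\theta_0-K_0$ should be recombined into one integral rather than expanded. Once that is done, positivity and monotonicity are transparent. The derivative computation is only a backup in case one prefers to work from the closed form directly.
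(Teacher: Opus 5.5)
Your proof is correct, but your main argument differs from the paper's. The paper works only with the closed form \eqref{eq:snr-final}. It differentiates to obtain $\pi\,\SNR'(\theta_0)=\frac{\cos\theta_0}{\sin^3\theta_0}\big[\tfrac12\sin2\theta_0-(\tfrac\pi2-\theta_0)\big]$. It then shows the bracket is negative using $g(\theta)=\tfrac\pi2-\theta-\tfrac12\sin2\theta$, which is nonincreasing with $g(\pi/2)=0$, and simply asserts that the endpoint limits give surjectivity. Your cross-check reproduces exactly this computation, and your inequality $\sin2\varphi<2\varphi$ is the same fact as $g>0$.

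Your main route instead undoes the simplification \eqref{eq:snr-simplify} and writes $\pi\,\SNR$ as one integral with integrand $\cos^2\theta\,(\sin^{-2}\theta_0-\sin^{-2}\theta)$, i.e.\ $\SNR=\int_{\lamcut}^{4}(\lamcut^{-1}-\lambda^{-1})f_1(\lambda)\dd\lambda$. This makes positivity, strict decrease, and the limit at $\pi/2$ structural facts: they would hold for any spectral law and do not depend on the trigonometric algebra. You also check the $1/(4\theta_0^2)$ rate and both limits explicitly, which the paper leaves implicit.

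The two routes are in fact linked. Differentiating your integral by the Leibniz rule gives $\pi\,\SNR'(\theta_0)=-2J_0(\theta_0)\cos\theta_0/\sin^3\theta_0$. So the paper's $g$ is just $2J_0$, and its positivity is obvious from \eqref{eq:J0} as an integral of $\cos^2\theta$. The paper's route does have one advantage: it certifies the stated closed form directly and yields an explicit derivative. That derivative is handy if one later wants $\mathrm{d}\C/\mathrm{d}\SNR$ along the parametrization.
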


\begin{proof}
Differentiating \eqref{eq:snr-final} gives
\begin{align}
\pi\,\SNR'(\theta_0)
= \frac{\cos\theta_0}{\sin^3\theta_0}
  \left[\tfrac{1}{2}\sin 2\theta_0 - \left(\tfrac{\pi}{2}-\theta_0\right)\right].
\end{align}
Let $g(\theta) = \pi/2 - \theta - \tfrac{1}{2}\sin 2\theta$. Then
$g'(\theta) = -1 - \cos 2\theta \le 0$ and $g(\pi/2)=0$, so $g>0$ on $(0,\pi/2)$.
The bracket equals $-g(\theta_0)<0$, so $\SNR(\theta_0)$ is strictly decreasing.
The limits as $\theta_0 \to 0$ and $\theta_0 \to \pi/2$ give surjectivity onto $(0,\infty)$.
\end{proof}

\noindent As a result, one can parametrize the capacity curve by $\theta_0$ rather than solving the scalar waterline constraint numerically at each $\SNR$. This is exactly the step that had been unknown. The numerical relation between $\nu$ and $\SNR$ has been transformed into an explicit invertible mapping.

\subsection{Full-CSI capacity in closed form}\label{sec:cap}
We evaluate the capacity at the reference angle. Using $\kappa=\nu\SNR=1/(4\sin^2\theta_0)$ from
\eqref{eq:theta0} and $\lambda=4\sin^2\theta$, we write:
\begin{align}
\nu\,\SNR\,\lambda=\kappa\cdot4\sin^2\theta=\frac{\sin^2\theta}{\sin^2\theta_0},
\end{align}
so the water level contributes to the capacity \emph{only} through the reference angle. Substituting this
and \eqref{eq:pushforward} into \eqref{eq:cap-general} at $\beta=1$ gives
\begin{align}
\C(\theta_0)&=\frac{4}{\pi}\int_{\theta_0}^{\pi/2}
\log\!\frac{\sin^2\theta}{\sin^2\theta_0}\,\cos^2\theta\dd \theta\notag\\
&=\frac{8}{\pi}\Big[\cI_3(\theta_0)-J_0(\theta_0)\log\sin\theta_0\Big],\label{eq:Cangular}
\end{align}
where we split $\log\tfrac{\sin^2\theta}{\sin^2\theta_0}=2\log\sin\theta-2\log\sin\theta_0$ and
abbreviate
\begin{align}
\cI_3(\theta_0)=\int_{\theta_0}^{\pi/2}\log(\sin\theta)\cos^2\theta\dd \theta .
\end{align}
The only nontrivial quantity is $\cI_3$. Writing $\cos^2\theta=\tfrac{1+\cos2\theta}{2}$,
integrating by parts, and using the classical evaluation
$\int_0^{\phi}\log\sin t\dd t=-\phi\log2-\tfrac12\Cl(2\phi)$ \cite{LewinPolylogarithmsAssociated1981} in terms of the Clausen function
\begin{align}
\label{eq:clausen}
\Cl(x)=\sum_{n\ge1}\frac{\sin nx}{n^2}=\Im\,\Li(e^{\jj x}),
\end{align}
one finds the antiderivative of $\log(\sin\theta)\cos^2\theta$, which we denote $G(\theta)$,
\begin{align}
G(\theta)&=-\frac{\theta}{2}\log2-\frac14\Cl(2\theta)
+\frac{\sin2\theta}{4}\log\sin\theta\notag\\
&\quad-\frac{\theta}{4}-\frac{\sin2\theta}{8},\label{eq:antideriv}
\end{align}
so that $\cI_3(\theta_0)=G(\pi/2)-G(\theta_0)$ with $G(\pi/2)=-\tfrac\pi8(2\log2+1)$ using $\Cl(\pi)=0$ and $\sin\pi=0$. Substituting $G(\pi/2)$ and \eqref{eq:antideriv} into \eqref{eq:Cangular}, the terms proportional to $\sin2\theta_0\,\log\sin\theta_0$ cancel against the
$-J_0(\theta_0)\log\sin\theta_0$ term, and one obtains the following parametric capacity expression.

\begin{proposition}[Parametric capacity]
\label{prop:cap}
For each $\theta_0\in(0,\pi/2)$, the full-CSIT capacity (in nats per receive antenna) of the square
canonical channel at $\SNR=\SNR(\theta_0)$ is
\begin{align}
\C(\theta_0)&=-(2\log2+1)+\frac{4\theta_0}{\pi}\log2+\frac{2}{\pi}\Cl(2\theta_0)\notag\\
&~~~~+\frac{2\theta_0}{\pi}+\frac{\sin2\theta_0}{\pi}
-\Big(2-\frac{4\theta_0}{\pi}\Big)\log\sin\theta_0.\label{eq:cap-final}
\end{align}
\end{proposition}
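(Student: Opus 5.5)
The plan is to start from the angular representation \eqref{eq:Cangular}, $\C(\theta_0)=\frac{8}{\pi}\big[\cI_3(\theta_0)-J_0(\theta_0)\log\sin\theta_0\big]$. This already incorporates the pushforward \eqref{eq:pushforward} and the relation $\nu\SNR\lambda=\sin^2\theta/\sin^2\theta_0$ from \eqref{eq:theta0}. By Proposition~\ref{prop:snr}, each $\theta_0\in(0,\pi/2)$ corresponds to exactly one $\SNR$, so $\C(\theta_0)$ is indeed the capacity at $\SNR(\theta_0)$.

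\emph{Step 1: verify the antiderivative.} I will check \eqref{eq:antideriv} by differentiating $G$ rather than rederiving it, using $\frac{\dd}{\dd x}\Cl(x)=-\log|2\sin(x/2)|$. This gives $\frac{\dd}{\dd\theta}\big[-\tfrac14\Cl(2\theta)\big]=\tfrac12\log(2\sin\theta)$, which cancels against $-\tfrac12\log2$ and leaves $\tfrac12\log\sin\theta$. Next, $\frac{\dd}{\dd\theta}\big[\tfrac{\sin2\theta}{4}\log\sin\theta\big]=\tfrac{\cos2\theta}{2}\log\sin\theta+\tfrac{\cos^2\theta}{2}$. Finally, $-\tfrac14-\tfrac{\cos2\theta}{4}=-\tfrac{\cos^2\theta}{2}$. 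Adding these gives $\tfrac{1+\cos2\theta}{2}\log\sin\theta=\cos^2\theta\log\sin\theta$, as required. I will then evaluate $G(\pi/2)=-\tfrac{\pi}{4}\log2-\tfrac{\pi}{8}$, using $\Cl(\pi)=0$ and the fact that $\sin\pi\log\sin(\pi/2)=0$.

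\emph{Step 2: assemble the result.} I form $\cI_3=G(\pi/2)-G(\theta_0)$ and subtract $J_0\log\sin\theta_0$ with $J_0$ taken from \eqref{eq:J0}. The key cancellation is between $-\tfrac{\sin2\theta_0}{4}\log\sin\theta_0$ (from $-G(\theta_0)$) and $+\tfrac14\sin2\theta_0\log\sin\theta_0$ (from $-J_0\log\sin\theta_0$). After this cancellation, the only logarithmic term left is $-\tfrac12(\tfrac\pi2-\theta_0)\log\sin\theta_0$. Multiplying through by $8/\pi$ term by term gives:
\begin{align*}
\tfrac{8}{\pi}G(\pi/2)&=-(2\log2+1), & \tfrac{8}{\pi}\cdot\tfrac{\theta_0}{2}\log2&=\tfrac{4\theta_0}{\pi}\log2,\\
\tfrac{8}{\pi}\cdot\tfrac14\Cl(2\theta_0)&=\tfrac{2}{\pi}\Cl(2\theta_0), & \tfrac{8}{\pi}\Big(\tfrac{\theta_0}{4}+\tfrac{\sin2\theta_0}{8}\Big)&=\tfrac{2\theta_0}{\pi}+\tfrac{\sin2\theta_0}{\pi},
\end{align*}
together with $-\tfrac{8}{\pi}\cdot\tfrac12(\tfrac\pi2-\theta_0)\log\sin\theta_0=-(2-\tfrac{4\theta_0}{\pi})\log\sin\theta_0$. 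These five pieces add up to exactly \eqref{eq:cap-final}.

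The main obstacle is bookkeeping rather than anything conceptual. The likely failure points are the sign and normalization conventions for $\Cl$ in Step~1, and keeping track of signs when $-G(\theta_0)$ is combined with $-J_0\log\sin\theta_0$. Both checks are purely mechanical. As a sanity check, I would confirm that $\C(\theta_0)\to0$ as $\theta_0\to\pi/2$: at that endpoint the bracket collapses to $-(2\log2+1)+2\log2+1+0+0-0=0$, which is consistent with $\SNR\to0$.
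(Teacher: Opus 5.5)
Your proposal is correct and follows the paper's route. You start from \eqref{eq:Cangular}, use the antiderivative $G$ of \eqref{eq:antideriv} with $G(\pi/2)=-\tfrac{\pi}{8}(2\log2+1)$, and cancel the $\sin2\theta_0\log\sin\theta_0$ terms against $-J_0\log\sin\theta_0$. The only difference is that you check $G$ by differentiating with $\frac{\dd}{\dd x}\Cl(x)=-\log|2\sin(x/2)|$, whereas the paper derives it by integrating by parts and using the classical evaluation of $\int_0^{\phi}\log\sin t\dd t$; that evaluation is just the integrated form of the same derivative identity, so this is not a different route.
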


Note how setting $\theta_0 = \pi/2 - \delta$ and expanding \eqref{eq:snr-final} and \eqref{eq:cap-final} gives
$\SNR = 2\delta^5/(15\pi) + O(\delta^7)$ and $\C = 8\delta^5/(15\pi) + O(\delta^7)$, so
$\C/\SNR \to 4$ nats as $\SNR \to 0$. This means that the low-SNR slope equals the upper spectral edge $b=4$, which is consistent with waterfilling allocating all power to the strongest eigendirections. Together with
$\C \to \log \SNR - 1$, the capacity in \eqref{eq:cap-final} is matched at both ends of the SNR range.

\noindent Together, \proprefand{prop:snr}{prop:cap} provide the closed-form parametric solution for the full-CSI capacity at $\beta=1$.

The result can be written in terms of the cutoff eigenvalue, but not in terms of the SNR. The angle is tied to the cutoff \emph{eigenvalue} by $\theta_0=\arcsin\!\sqrt{\lamcut/4}$
(equivalently $\sin\theta_0=\sqrt{\lamcut}/2$, $\cos\theta_0=\sqrt{4-\lamcut}/2$), so one may substitute this into \eqref{eq:snr-final} and \eqref{eq:cap-final} to express both the $\SNR$ and the capacity in closed form as functions of $\lamcut$, the smallest powered eigenvalue. What one \emph{cannot} do is write the capacity $\C$ as an elementary closed form directly in $\SNR$: that would require inverting the transcendental map \eqref{eq:snr-final}, $\theta_0\mapsto\SNR(\theta_0)$, which has no elementary inverse. This is exactly why the reference left the finite-SNR value only implicit, and why the natural free parameter of the closed form is the angle $\theta_0$ (or the cutoff $\lamcut$) rather than $\SNR$ itself. The pair $\big(\SNR(\theta_0),\C(\theta_0)\big)$ traces the same curve as $\C$-versus-$\SNR$ would, without ever inverting the map.

\section{Numerical validation}\label{sec:verify}
The closed form of \proprefand{prop:snr}{prop:cap} agrees with brute-force waterfilling to six decimal places over the full SNR range. The reference values come from solving the scalar constraint \eqref{eq:constraint-general} for $\nu$ at a given $\SNR$ and integrating \eqref{eq:cap-general} numerically. \figref{fig:cap} plots the capacity in nats per receive antenna versus SNR in dB. The solid blue curve is the closed form, traced by sweeping $\theta_0\in(0,\pi/2)$ and reading off the pairs $\big(\SNR(\theta_0),\C(\theta_0)\big)$. The red circles are finite-size waterfilling at $\Nt=\Nr=64$, averaged over IID complex Gaussian realizations of $\bH$. They lie on the curve across the whole range, so the asymptotic formula is already accurate at moderate antenna counts.

\begin{figure}[!h]
\centering
\vspace{-0.2cm}
\includegraphics[width=0.93\columnwidth]{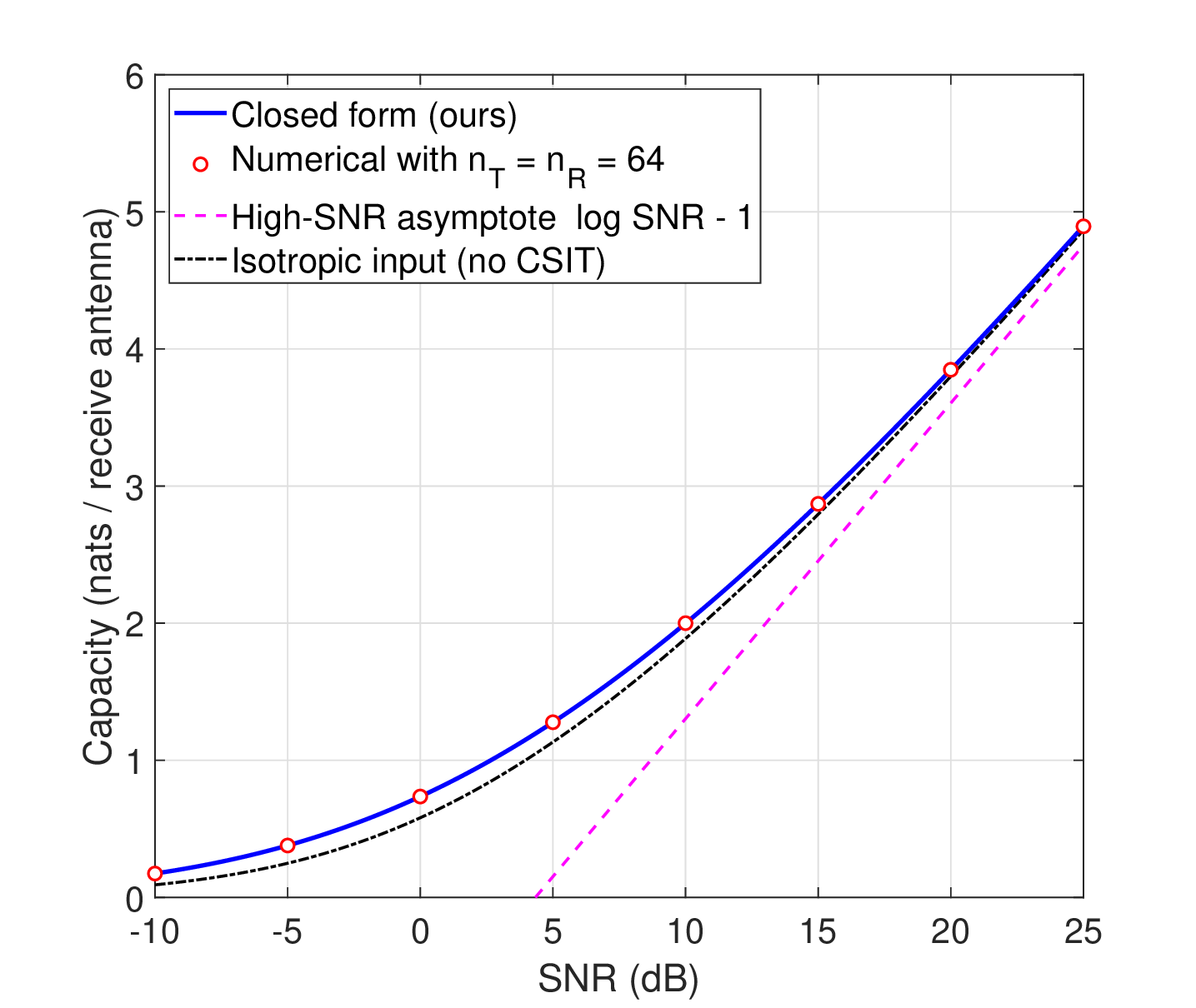}
\caption{Full-CSIT capacity of the square ($\beta=1$) canonical MIMO channel versus SNR (dB). The closed form of \proprefs{prop:snr}{prop:cap} (solid) matches brute-force finite-size waterfilling at $\Nt=\Nr=64$ (circles) across the whole SNR range. The high-SNR asymptote $\log\SNR-1$ (dashed) is tight above $\sim$20\,dB but is not meaningful at low SNR. The gap to the isotropic input without CSIT (dash-dot) is the waterfilling gain.}
\label{fig:cap}
\end{figure}

Above about $20$~dB the capacity approaches the high-SNR asymptote $\log\SNR-1$ (dashed magenta). Below that point the asymptote is not meaningful, since it goes negative at low SNR. The vertical gap between the isotropic input without CSIT (dash-dot black) and the waterfilling capacity (blue) is the waterfilling gain from full transmit CSI. That gain is widest at low SNR and closes as $\SNR\to\infty$, where the two strategies meet the asymptote.

\section{Concluding remarks}\label{sec:remarks}

In this letter we derived the full-CSI capacity of the square Mar\v{c}enko--Pastur MIMO channel in closed form, a quantity previously available only through numerical evaluation. The non-square ratios are solvable because waterfilling reduces to a full-support integral at high SNR once the interior cutoff vanishes. The square ratio is harder, since the capacity integral has no full-support representation at any finite SNR. The trigonometric parametrization itself is classical \cite{MarcenkoPasturDistributionEigenvalues1967, NicaSpeicherLecturesCombinatoricsFree2006}. What is new is the closed-form evaluation of both the SNR and the capacity at $\beta=1$, which turns the interior cutoff into an explicit parameter rather than a numerically solved water level. The capacity curve is therefore traced by sweeping one angle, with the low-SNR slope reading off the upper spectral edge.

The derivation is specific to the canonical IID channel with full CSIT at $\beta=1$. Correlated, Rician, and non-square models need their own limiting spectra, though the same change of variable applies whenever the limiting law has a hard edge at the origin, even when the resulting integrals differ. A natural next step is to link this finite-SNR description to detection-theoretic thresholds for the same square channel.

\section*{Acknowledgment}
Gemini 3.1 Pro and Opus 4.8 were used in the preparation of this manuscript for editing, grammar checking, and knowledge retrieval. No passages were copied without full author review, and all substantive ideas, analyses, and conclusions are the product and responsibility of the authors. The change of variable in \eqref{eq:sub} that led to the closed form was suggested by these tools, though the derivation and verification were performed by the authors.

\bibliographystyle{IEEEtran}
\bibliography{references.bib}    

\end{document}